\newcommand{\ALOOP}[1]{\ALC@it\algorithmicloop\ #1%
  \begin{ALC@loop}}
\newcommand{\ENDALOOP}{\end{ALC@loop}\ALC@it\algorithmicendloop}
\newtheorem{theorem}{\textbf{\emph{Theorem}}}
\newtheorem{definition}{\textbf{\emph{Definition}}}
\begin{document}

\title{Optimizing Secure Decision Tree Inference Outsourcing}
\author{Yifeng Zheng, Cong Wang, \emph{Fellow, IEEE}, Ruochen Wang, Huayi Duan, and Surya Nepal
\IEEEcompsocitemizethanks{

\IEEEcompsocthanksitem Y. Zheng is with the School of Comptuer Science and Technology, Harbin Institute of Technology, Shenzhen. E-mail: yifeng.zheng@my.cityu.edu.hk.

\IEEEcompsocthanksitem C. Wang is with the Department of Computer Science, City University of Hong Kong, Hong Kong. E-mail: congwang@cityu.edu.hk.

\IEEEcompsocthanksitem R. Wang and H. Duan are with the Department of Computer Science, City University of Hong Kong, Hong Kong. E-mail: ruochwang@gmail.com, hduan2-c@my.cityu.edu.hk.

\IEEEcompsocthanksitem S. Nepal is with Data61, CSIRO, Marsfield NSW 2122, Australia, and also with the Cyber Security Cooperative Research Centre (CRC), Joondalup WA 6027, Australia. Email: surya.nepal@data61.csiro.au.

}
}

\IEEEtitleabstractindextext{%
\begin{abstract}
Outsourcing decision tree inference services to the cloud is highly beneficial, yet raises critical privacy concerns on the proprietary decision tree of the model provider and the private input data of the client. In this paper, we design, implement, and evaluate a new system that allows highly efficient outsourcing of decision tree inference. Our system significantly improves upon the state-of-the-art in the overall online end-to-end secure inference service latency at the cloud as well as the local-side performance of the model provider. We first presents a new scheme which securely shifts most of the processing of the model provider to the cloud, resulting in a substantial reduction on the model provider's performance complexities. We further devise a scheme which substantially optimizes the performance for encrypted decision tree inference at the cloud, particularly the communication round complexities. The synergy of these techniques allows our new system to achieve up to $8 \times$ better overall online end-to-end secure inference latency at the cloud side over realistic WAN environment, as well as bring the model provider up to $19 \times$ savings in communication and $18 \times$ savings in computation.
\end{abstract}

\begin{IEEEkeywords}
Privacy preservation, decision trees, cloud, inference service, secure outsourcing
\end{IEEEkeywords}}

\maketitle

\IEEEdisplaynontitleabstractindextext

\IEEEpeerreviewmaketitle

\section{Introduction}

Machine learning inference services greatly benefit various kinds of application domains (e.g., healthcare \cite{AzarE13a,libbrecht2015machine,WangZLZL20}, finance \cite{YapOH11,DelenKU13}, and intrusion detection \cite{SindhuGK12,AmorBE04}), and its rapid development has been largely facilitated by cloud computing \cite{AzureML,googleML,awssagemaker} in recent years.
%
%
In this emerging machine learning based service paradigm, a model provider can deploy a trained model in the cloud, which can then provide inference services to the clients.
Outsourcing such services to the cloud promises well-understood benefits for both the model provider (\emph{provider} for short) and client, such as scalability, ubiquitous access, and economical cost.

Among others, decisions trees are one of the most popular machine learning models due to its ease of use and effectiveness, and have been shown to benefit real applications like medical diagnosis \cite{AzarE13a,LiangQL19} and credit-risk assessment \cite{YapOH11}.
%
%
Briefly, a decision tree is comprised of some internal nodes, which are called decision nodes, and some leaf nodes.
Each decision node is used to compare a threshold with a certain feature in the feature vector, which is the input to decision tree evaluation, and decide the branch to be taken next.
And each leaf node carries a prediction value indicating the inference result.
Decision tree inference over an input feature vector is equivalent to tree traversal starting at the root node and terminating when a leaf node is reached.

While outsourcing the decision tree inference service to the cloud is quite beneficial, it also raises critical privacy concerns on the decision tree model and the input data.
On the provider side, it is widely known that training a high quality model requires a significant amount of investment on datasets (possibly sensitive), resources, and specialized skills. It is thus important that the decision tree is not exposed in the service so that the intellectual property as well as the profitability and competitive advantage of the provider could be respected.
On the client side, the input feature vector may contain sensitive information, e.g., data in medical applications or financial applications.
Overcoming the privacy hurdles is thus of paramount importance to help the provider and client gain confidence in outsourced decision tree inference services.
Towards this challenge, a recent research endeavor has been presented by Zheng et al. \cite{ZhengDWWN20}, which represents the state-of-the-art.
Their design is based on the lightweight additive secret sharing technique and works under a compatible architecture where two cloud servers from independent cloud providers are employed to jointly conduct the decision tree inference in the ciphertext domain. 
As an initial endeavor, however, their design is not fully satisfactory and yet to be optimized in performance, as we detail below.

Firstly, the performance complexity of the provider is dependent on the size of the decision tree as well as the feature vector. 
Specifically, the provider needs to construct and encrypt a binary matrix of size scaling to the product of the number $J$ of decision nodes and the dimension $I$ of the feature vector, so as to support secure feature selection (more details in Section \ref{subsec:secure-input-preparation}). 
Such multiplicative complexity $O(J\cdot I)$ leads to practically unfavorable overhead, which would be further aggravated when the provider needs to outsource multiple decision trees, either for different application domains, or for random forest (an ensemble of decision trees) evaluation.

Secondly, at the cloud side, the phase of secure decision node evaluation has communication rounds linear to the number of bits for value representation.
This is unfavorable in the real-world scenario when the two cloud servers are situated in different geographic regions and communicate over WAN, which is a more reasonable setting than local networks given that the two cloud servers are assumed from different trust domains \cite{ChenPopa20}.

In light of the above observations, In this paper, we present a new highly efficient design for secure decision tree inference outsourcing which significantly improves upon the state-of-the-art.
%
Our design follows the same architecture of \cite{ZhengDWWN20}, and also makes use of additive secret sharing, yet with significant optimizations to achieve largely boosted performance compared to the state-of-the-art work.

Firstly, we design a new scheme which makes the provider's performance complexity \emph{independent} of the feature vector and thus free of the above multiplicative complexity, through a new re-formulation of the secure feature selection problem.
We make an observation that secure feature selection can indeed be treated as an oblivious array-entry read problem, where the encrypted feature vector could be treated as an encrypted array, and the encrypted index value is used to obliviously select an entry from the array. 
During the procedure, it is required that no information about the feature vector, index value, and selected feature be revealed. 
With this observation, we propose a new secure feature selection design where the provider only needs to construct and encrypt an indexing vector with size $O(J)$, rather than a matrix of size $O(J\cdot I)$ as proposed in \cite{ZhengDWWN20}.

Secondly, we note that the linear communication round complexity of the prior work \cite{ZhengDWWN20} in the secure decision node evaluation phase is due to the secure realization of a ripple carry adder for secret-shared comparison, which faces a delay problem due to sequential procedure of carry computation.
Our observation from the field of digital circuit design is that the carry delay problem presented in the ripple carry adder can be solved via the advanced carry look-ahead adder \cite{DHarris03}.
With this observation, we craft a new design for secure decision node evaluation, through digging deep into the logic and computation of the carry look-ahead adder and appropriately organizing the computation in a secure and efficient manner. 
Our new design achieves a \emph{logarithmic} communication complexity for secure decision node evaluation, gaining superior suitability for practical deployment in WAN environments.
As a concrete example, we are able to significantly reduce the rounds of secure decision node evaluation at the cloud servers from $125$ to $7$ (with the bit length for value representation being $64$), greatly reducing the network latency due to interaction rounds.
We also provide concrete complexity analysis, showing that such significant gain does not sacrifice computational efficiency in terms of the number of secret-shared multiplications.

The synergy of the above optimization techniques lead to a new highly efficient cryptographic inference protocol which achieves a significant reduction on the overall online inference latency at the cloud, as well as a significant boost in the provider's performance, as compared to the state-of-the-art.
We provide formal security analysis of our design under the standard simulation based paradigm.
We implement our system and make deployment on the Amazon cloud for performance evaluation over various decision trees with realistic sizes.
Compared with the state-of-the-art prior work \cite{ZhengDWWN20}, the overall online end-to-end inference latency at the cloud servers over realistic WAN environment is up to $8 \times$ better.
In the meantime, our system offers the provider up to $19 \times$ savings in communication and $18\times$ savings in computation.

\begin{figure}[t!]
\centerline{\includegraphics[width=0.46\textwidth]{./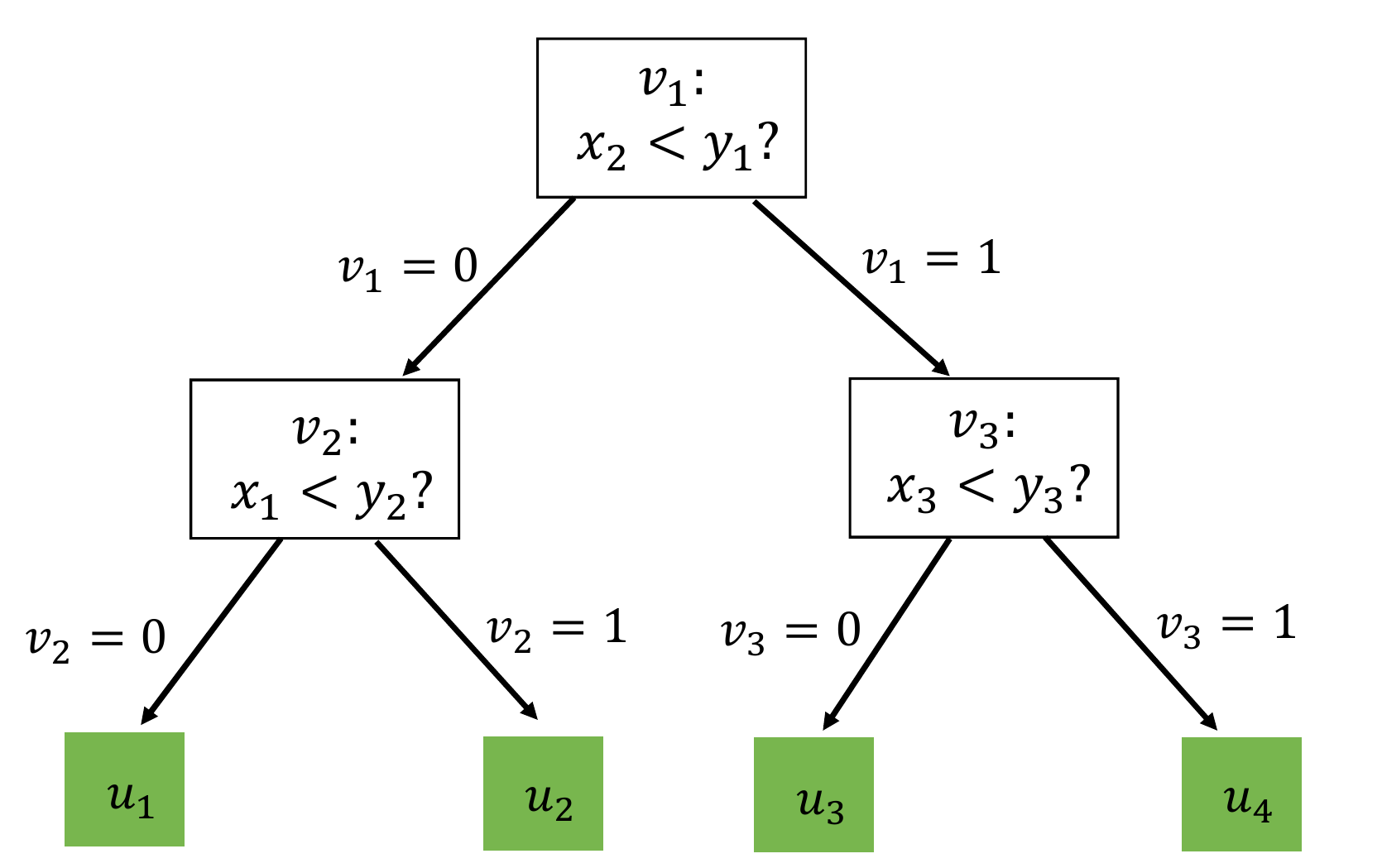}}
\caption{Decision tree illustration.}
\label{fig:decision_tree}
\end{figure}

The rest of this paper is organized as follows.
Section \ref{sec:preliminaries} introduces some preliminaries.
Section \ref{sec:problem-statement} describes the system model and threat model.
Section \ref{sec:secure-design} gives the details of our design.
Section \ref{sec:security-analysis} provides the security analysis.
Section \ref{sec:experiments} shows the experiments.
Section \ref{sec:related_work} discusses the related work.
Section \ref{sec:conclusion} concludes the whole paper.

\section{Preliminaries}
\label{sec:preliminaries}

\subsection{Decision Tree Inference}

Fig. \ref{fig:decision_tree} illustrates a decision tree.
As shown, each internal node (called decision node $\mathcal{D}_j$) is associated with a threshold $y_j$, while each leaf node $\mathcal{L}_z$ is associated with a prediction value $u_z$ indicating the possible inference result.
Hence, given a decision tree with $J$ decision nodes and $Z$ leaf nodes, a threshold vector $\mathbf{y}=\{y_0,\cdots, y_{J-1}\}$ and a prediction value vector $\mathbf{u}=\{u_{0},\cdots, u_{Z-1}\}$ are derived.
The input for decision tree inference is an $I$-dimensional feature vector, denoted by $\textbf{x}=\{x_{0},\cdots, x_{I-1}\}$.
There is an associated input selection mapping $\sigma: j\in \{0, 1,\cdots, J-1\}  \rightarrow i \in \{0,1,\cdots, I-1\}$.
Decision tree inference with $\mathbf{x}$ as input works as follows. 
Firstly, the mapping $\sigma$ is used to select a feature $x_i$ from $\mathbf{x}$ for each $\mathcal{D}_j$. 
Secondly, starting from the root node, the Boolean function $f(x_{\sigma(j)})=(x_{\sigma(j)} < y_j)$ is evaluated at each $\mathcal{D}_j$.
The evaluation result $v_j$ decides whether to next take the left ($v_j=0$) or right ($v_j=1$) branch.
Such evaluation terminates when a leaf node is reached.
The depth $d$ is the length of the longest path between the root node and a leaf node.
Table \ref{table:notations} provides a summary of the key notations. 
Without loss of generality and as the tree structure should be hidden, we will consider complete binary decision trees in our security design, which is also consistent with previous works \cite{DWuFNL16,TaiMZC17,TuenoKK19,Cock17,ZhengDWWN20}.
It is noted that dummy nodes can be simply added to make non-complete decision trees complete \cite{DWuFNL16}.


\begin{table}[t!]

\centering
\caption{Key Notations}
\begin{tabular}{@{}c|l@{}}
\toprule
Notation & Description                           \\ \hline

$\mathbf{x}$ & Feature vector\\
$\mathbf{y}$ & Threshold vector \\
$x_i$        & The $i$-th feature in the feature vector                        \\ 
$y_j$        & The threshold at decision node $\mathcal{D}_j$                       \\ 
$d$ & Depth of a decision tree \\
$J $       & Number of decision nodes              \\ 
$I$       & Dimension of feature vector           \\ 
$Z$       & Number of leaf nodes              \\ 
$l$ & Number of bits for value representation \\ 
$v_j$     & Evaluation result at decision node $\mathcal{D}_j$ \\
$u_z$     & Prediction value of leaf node $\mathcal{L}_z$ \\ \bottomrule
\end{tabular}
\label{table:notations}
\end{table}

\subsection{Additive Secret Sharing} 
%
Given a value $\alpha \in\mathbb{Z}_{2^l}$, its $2$-of-$2$ additive secret sharing is a pair $([\alpha]_0=\alpha-r$, $[\alpha]_1=r)$, where $r$ is a random value in $\mathbb{Z}_{2^l}$ and the subtraction is done in $\mathbb{Z}_{2^l}$ (i.e., result is modulo $2^l$).
Given either $[\alpha]_0$ or $[\alpha]_1$, the value $\alpha$ is perfectly hidden.
Suppose that two values $\alpha$ and $\beta$ are secret-shared among two parties $\mathcal{P}_0$ and $\mathcal{P}_1$, i.e., $\mathcal{P}_0$ holds $[\alpha]_0$ and $[\beta]_0$ while $\mathcal{P}_1$ holds $[\alpha]_1$ and $[\beta]_1$.
%
%
The secret sharing $[\alpha+\beta]$ (resp. $[\alpha-\beta]$) of $\alpha+\beta$ (resp. $\alpha-\beta$) can be computed locally where each party $\mathcal{P}_i$ ($i\in \{0,1\}$) directly computes $[\alpha+\beta]_{i}=[\alpha]_{i}+[\beta]_{i}$ (resp. $[\alpha-\beta]_{i}=[\alpha]_{i}-[\beta]_{i}$).
Multiplication by a constant $\gamma$ on the value $\alpha$ can also be done locally, i.e., $[\alpha\cdot \gamma]_{i}=\gamma \cdot [\alpha]_{i}$. 
Multiplication over two secret sharings $[\alpha]$ and $[\beta]$ can be supported by using the Beaver's multiplication triple \cite{Beaver91a,Corrigan-GibbsB17}.
That is, given the secret sharing of a multiplication triple $(t_1,t_2,t_3)$ where $t_3=t_1\cdot t_2$, $[\alpha\cdot \beta]$ can be obtained with one round of interaction between the two parties.
In particular, each party $\mathcal{P}_i$ first computes $[e]_i=[\alpha]_i - [t_1]_i$ and $[f]_i=[\beta]_i - [t_2]_i$.
Then, $\mathcal{P}_i$ broadcasts $[e]_i$ and $[f]_i$, and recovers $e$ and $f$.
Given this, each party $P_i$ computes $[\alpha \cdot \beta]_i=i \cdot e \times f+ [t_1]_i \times f + [t_2]_i \times e + [t_3]_i$. 

\section{Problem Statement}
\label{sec:problem-statement}

\subsection{System Model}

Fig. \ref{fig:system_architecture} shows our system architecture, comprised of the provider, the client, and two cloud servers hosted by independent and geographically separated cloud services.
Such architecture follows the state-of-the-art prior work \cite{ZhengDWWN20}.
The provider (e.g., a medical institution) owns a decision tree model and provides inference services to the client with the power of cloud computing, i.e., outsourcing the inference service to the cloud.
Due to concerns on the proprietary decision tree, the provider would only provide an encrypted version.
The client (e.g., a patient) holds a feature vector which may encode private information such as weight, height, heart rate, and blood pressure, and wants to use the intelligent inference service to obtain a prediction about, e.g., her health.
As the feature vector is privacy-sensitive, the client is only willing to provide a ciphertext.

\begin{figure}[t!]
\centerline{\includegraphics[width=0.48\textwidth]{./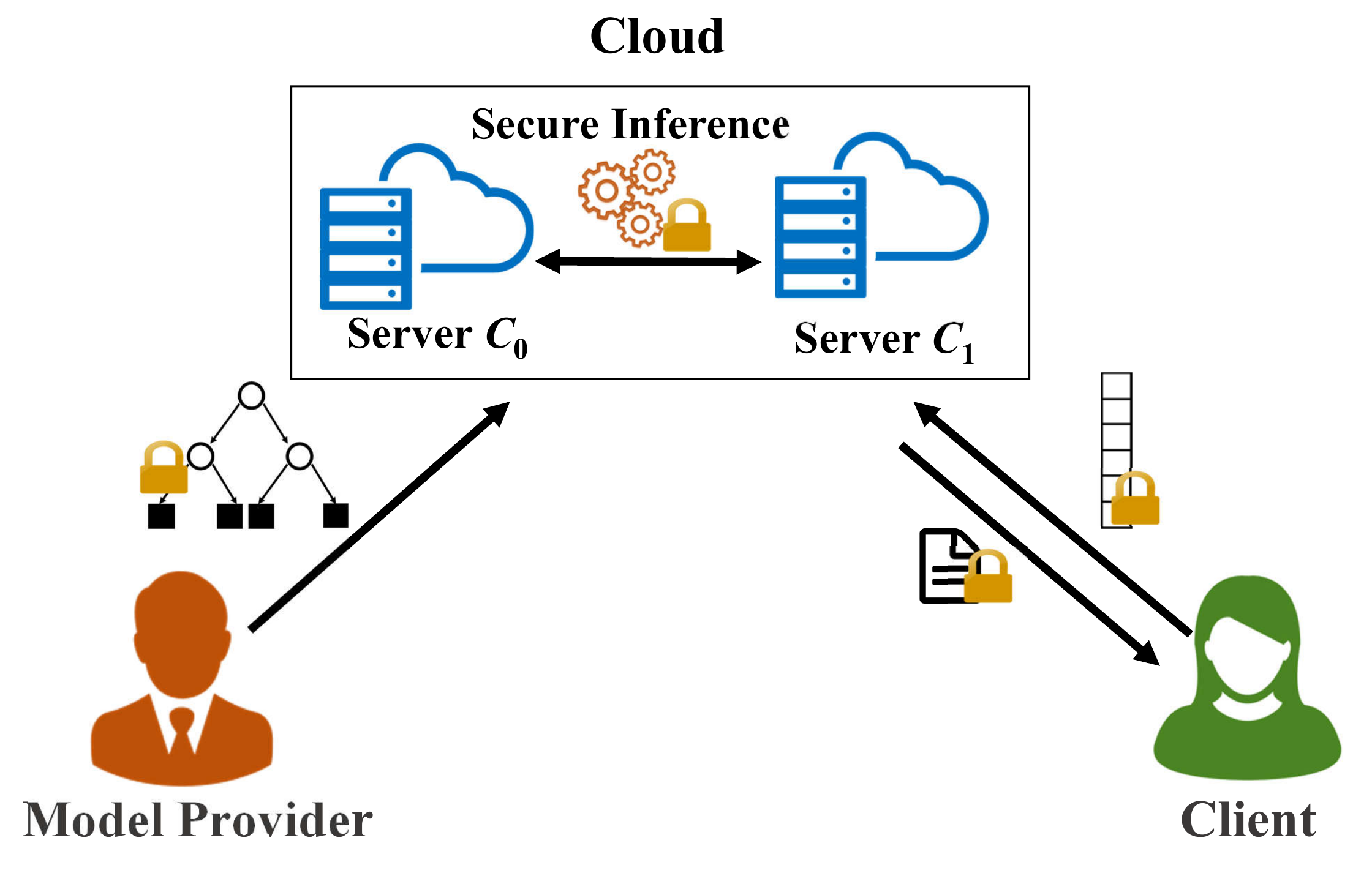}}
\caption{The system architecture.}
\label{fig:system_architecture}
\end{figure}

The power of the cloud is considered to be supplied by the two cloud servers $\mathcal{C}_0$ and $\mathcal{C}_1$, which jointly provide the secure decision tree inference service. 
Such a two-server model not only appeared in the prior work \cite{ZhengDWWN20} on secure outsourced decision tree inference, but has also been recently used to facilitate security designs in different applications \cite{WangWHZR16,MohasselZ17,0002SKG19,ChenPopa20,ZhengDW18}, with tailored use according to problem specifics.
The prominent advantage of such a two-server model is that it allows the provider and the client to go offline after supplying the encrypted inputs, and the secure inference computation is can be fully run at the cloud.
Besides, it is compatible with the working paradigm of additive secret sharing, which is applied for the encryption of the decision tree and feature vector. 
Each cloud server receives shares of the decision tree and feature vector.
They jointly do the processing and produce secret-shared inference result which can be retrieved by the client on demand to reconstruct the inference result.

It is noted that as the two cloud servers are assumed from different trust domains, a practical consideration on realistic deployment is that the two cloud servers be situated in different geographic regions and communicate over a WAN, which is a more reasonable setting compared to local networks.
In this case, the latency due to interactions between the cloud servers should be taken into account as an important factor in the secure system design.

\begin{figure*}[!t]
\centering

\fbox{
  \begin{minipage} [t]{0.8\textwidth}

\textbf{Input:} Secret sharings $[\mathcal{I}_j]$ and $[\mathbf{x}]$.

\textbf{Output:} Secret sharing $[x_{\mathcal{I}_j}]$.

\begin{enumerate}[1:]

\item  Each $\mathcal{C}_m$ creates an array $\mathbf{p}'_m$ where the $i$-th element is $\mathbf{p}'_m[i]=\mathbf{p}_m[i^*_m]+r_m$.
Here, $s_m \leftarrow \mathbb{Z}_{2^l}$ and $r_m \leftarrow \mathbb{Z}_{2^l}$ are random values chosen by $\mathcal{C}_m$; and $i^*_m=((i+s_m)\bmod 2^l) \bmod I$.

\item $\mathcal{C}_0$ chooses a random value $r\leftarrow \mathbb{Z}_{2^l}$ and sends $[\mathcal{I}_j]_0'=[\mathcal{I}_j]_0+r$ to $\mathcal{C}_1$. 

\item $\mathcal{C}_1$ computes $[\mathcal{I}_j]_0'+[\mathcal{I}_j]_1+s_1=\mathcal{I}_j+r+s_1$ and sends it to $\mathcal{C}_0$. 

\item $\mathcal{C}_0$ removes $r$ and produces $i_1'=((\mathcal{I}_j+s_1)\bmod 2^l) \bmod I$.

\item $\mathcal{C}_0$, with $i_1'$ as input, acts as the receiver to run an OT protocol with $\mathcal{C}_1$ to obtain $\mathbf{p}'_1[i_1']$.

\item $\mathcal{C}_1$, in a symmetric manner following Steps 2-5, obtains $\mathbf{p}'_0[i_0']$, where $i_0'=((\mathcal{I}_j+s_0) \bmod 2^l) \bmod I$.

\item $\mathcal{C}_1$ chooses a random value $r'\leftarrow \mathbb{Z}_{2^l}$ and sends $\mathbf{p}^*_0[i'_0]=\mathbf{p}'_0[i'_0]-r_1-r'$ to $\mathcal{C}_0$. Also, $\mathcal{C}_1$ sets $r'$ as its share $[x_{\mathcal{I}_j}]_1$ for the expected feature $x_{\mathcal{I}_j}$.

\item $\mathcal{C}_0$ computes $\mathbf{p}^*_0[i'_0]+ \mathbf{p}'_1[i'_1]-r_0=x_{\mathcal{I}_j}-r'$ and sets the result as its share $[x_{\mathcal{I}_j}]_0$ for $x_{\mathcal{I}_j}$.

\end{enumerate}

\end{minipage}
}
\caption{Secure feature selection for a decision node $\mathcal{D}_j$.}
\label{fig:secure-input-selection}
\end{figure*}

\subsection{Threat Model}

Following prior work on secure outsourced decision tree inference as well as most of existing works on privacy-preserving machine learning \cite{ZhengDWWN20,0002SKG19,MohasselZ17}, we consider a semi-honest adversary setting in our system.
A semi-honest adversary would honestly follow our protocol, yet attempts to infer private information beyond its access rights.
In our system, it is considered that each entity (cloud server, client, provider) might be corrupted by such adversary.
For the cloud server entity, we follow previous works under the two-server model (\cite{MohasselZ17,ZhengDWWN20,ChenPopa20,0002SKG19,ZhengDW18,WangWHZR16}) and assume they are non-colluding.
Namely, the two cloud servers are not corrupted by an adversary at the same time.

Consistent with \cite{ZhengDWWN20}, we consider that the values in the client's feature vector $\mathbf{x}$ as well as the inference result (i.e., the prediction value $u^*$ corresponding to $\mathbf{x}$) should be kept private for the client.
For the provider, there is a need to keep private the proprietary parameters/information of the decision tree model, including each decision node's threshold $y$, the mapping $\sigma$ for feature selection, and the prediction value of each leaf node (except the inference result revealed to the client per inference).
It is also required that the client learns no additional private information about the decision tree other than the prediction value corresponding to her feature vector.
Following prior work \cite{DWuFNL16,ZhengDWWN20}, we assume some generic meta-parameters as public, including the depth $d$, the dimension $I$, and the number $l$ of bits for value representation.
We deem dealing with adversarial machine learning attacks out of the scope.

\section{Our Proposed Design}
\label{sec:secure-design}

\subsection{Overview}

Our system is aimed at secure outsourcing of decision tree inference with high efficiency.
Treating local efficiency as the first priority in our design philosophy, we first aim to shift as much processing as possible to the cloud, reducing the local performance complexities (particularly with respect to the provider in our system).
On top of such consideration, we further aim to achieve high efficiency at the cloud through optimizing the processing.
Our deign mainly relies on the delicate use of the lightweight additive secret sharing technique, rather than uses resource-intensive garbled circuits and homomorphic encryption.  

At a high level, our design is comprised of four phases: secure input preparation, secure feature selection, secure decision node evaluation, and secure inference generation.
The secure input preparation phase requires the provider (resp. the client) to encrypt the decision tree (resp. feature vector), and send the ciphertexts to the cloud servers.
The secure feature selection phase is to securely select for each decision node a certain feature from the feature vector, in such a way that the cloud servers are oblivious to the mapping between decision nodes and features.
The secure decision node evaluation phase securely evaluates the Boolean function at each decision node and output the ciphertext of the evaluation result.
The secure inference generation phase is to leverage the ciphertexts of the evaluation results at decision nodes to generate the ciphertext of the ultimate decision tree inference result, which can then be retrieved by the client for recovery.
Note that following the previous work \cite{ZhengDWWN20}, we assume that the data-independent multiplication triples are pre-generated and made available to the two cloud servers for use in our design, which can be efficiently achieved via a semi-honest third party \cite{RiaziWTS0K18,ZhengDWWN20}. 
Our focus is on the latency-sensitive online inference procedure. 

\subsection{Secure Input Preparation}
\label{subsec:secure-input-preparation}

The client encrypts her feature vector $\mathbf{x}$ via additive secret sharing applied in an element-wise manner. 
In particular, the client generates two secret shares: $[\mathbf{x}]_0=\mathbf{x}-\mathbf{r}$ and $[\mathbf{x}]_1=\mathbf{r}$.
For the provider, he encrypts the decision tree as follows.
Firstly, the vector $\mathbf{y}$ of thresholds at decision nodes and the vector $\mathbf{u}$ of prediction values at leaf nodes are encrypted through additive secret sharing, with the secret shares $[\mathbf{y}]_0$, $[\mathbf{y}]_1$, $[\mathbf{u}]_0$, and $[\mathbf{u}]_1$ produced.
Then, we need to consider how to properly encrypt the mapping $\sigma$ which is used for feature selection.

We note that the prior work \cite{ZhengDWWN20} constructs a binary matrix of size $J \times I$ in such a manner that the $j$-th row vector is a binary vector with $I$ elements where all are $0$ except for the one at position $\sigma(j)$ being set to $1$.
In this way, feature selection is then realized via matrix-vector multiplication between the binary matrix and the feature vector, which can be securely supported under additive secret sharing.
Unfortunately, such an approach imposes on the provider multiplicative $O(J \cdot I)$ performance complexity which depends on the number $J$ of decision nodes as well as the dimension $I$ of the feature vector.

Differently, in order to minimize the costs of the provider, our new insight is to instead construct an index vector $\mathcal{I}$ which is comprised of the selection index values for decision nodes and thus the complexity only depends on the number $J$ of decision nodes, i.e., $O(J)$. 
The selection index value for a decision node $\mathcal{D}_j$ is represented as $\mathcal{I}_j\in [0,I-1]$.
The secure usage of this index vector will be described shortly in the phase of secure feature selection. 
The provider also encrypts this index vector via additive secret sharing and produces $[\mathcal{I}]_0$ and $[\mathcal{I}]_1$.
After the above processing, the provider sends the shares $[\mathbf{y}]_m$, $[\mathbf{u}]_m$, and $[\mathcal{I}]_m$ to each cloud server $\mathcal{C}_m$ ($m\in \{0,1\}$).

\begin{figure*}[t!]
\centerline{\includegraphics[width=0.5\textwidth]{./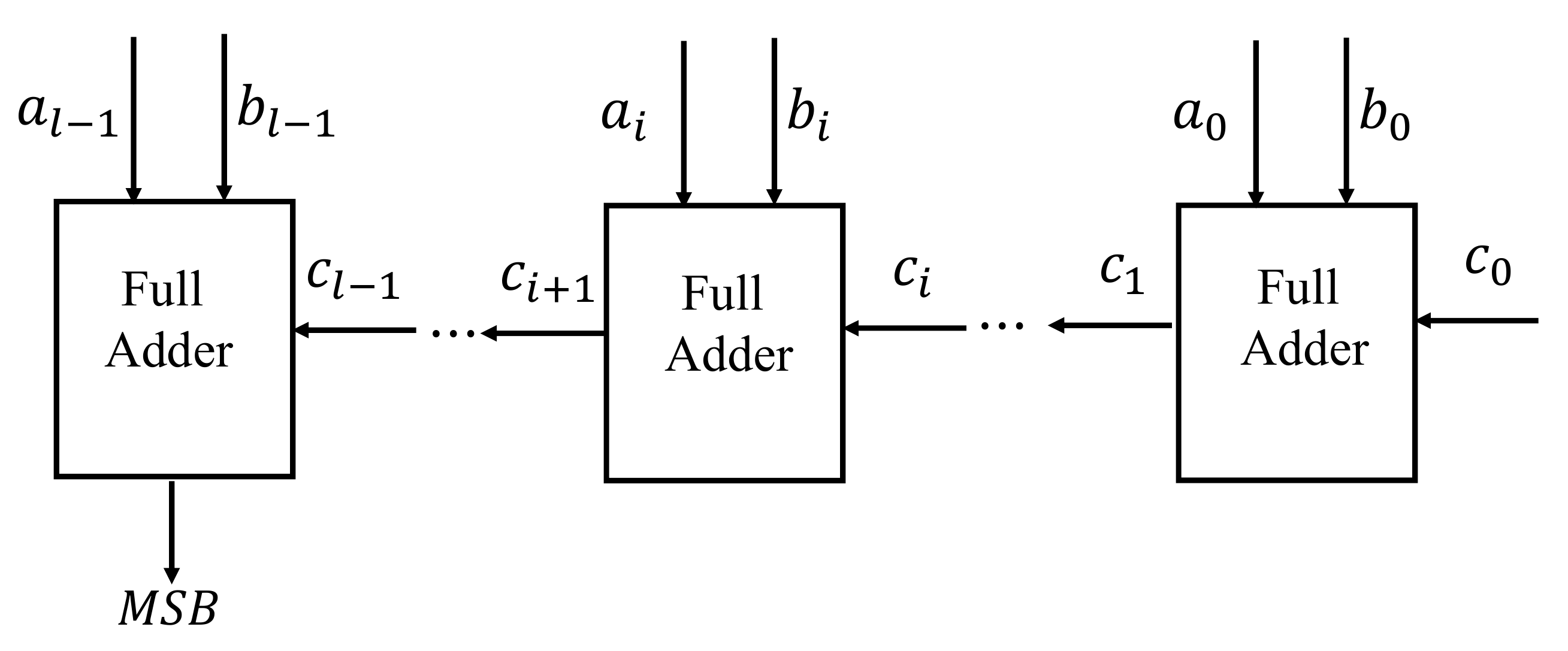}}
\caption{Illustration of a ripple carry adder logic for MSB computation.}
\label{fig:ripple_carry_adder}
\end{figure*}

\begin{figure*}[!t]
\centering
  \begin{minipage}[t]{0.28\linewidth}
    \centering
    \includegraphics[width=\linewidth]{./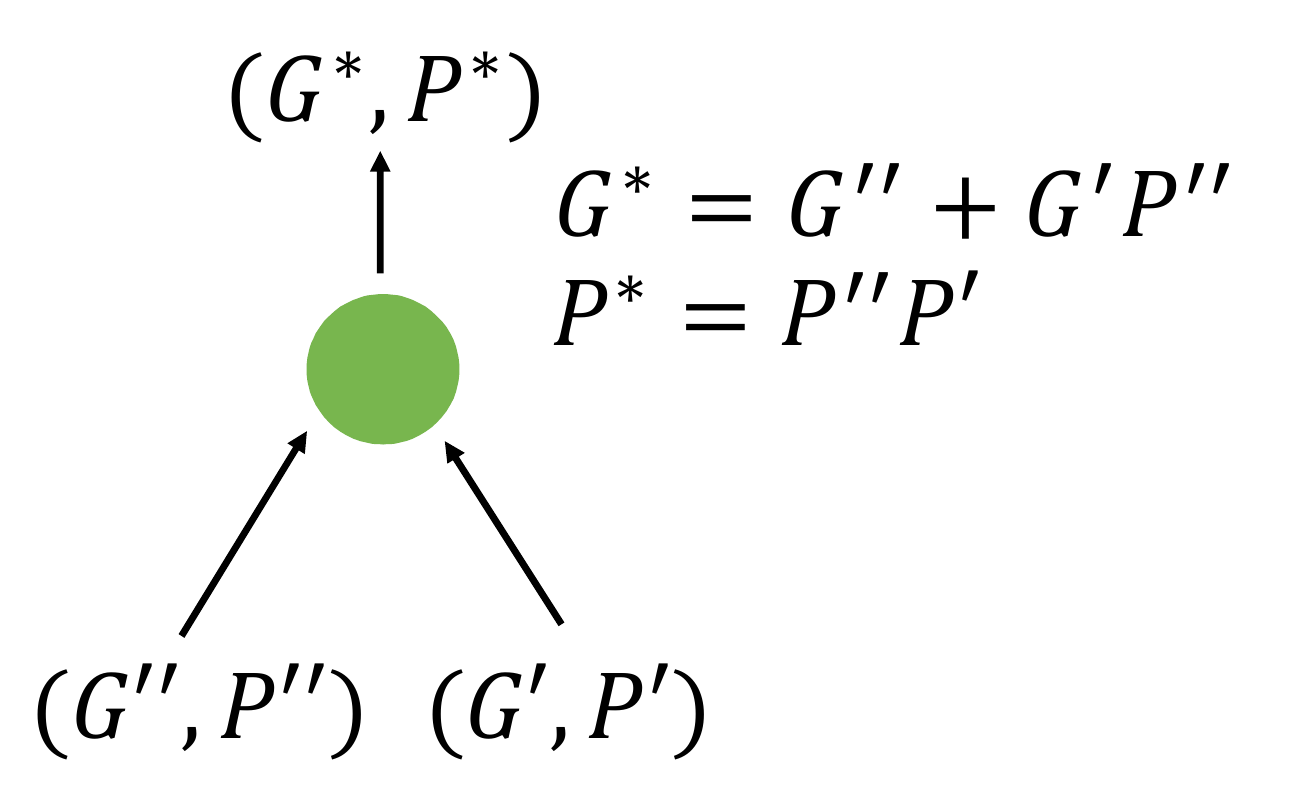}\\\footnotesize{(a)}
  \end{minipage}%
  \begin{minipage}[t]{0.42\linewidth}
    \centering
    \includegraphics[width=\linewidth]{./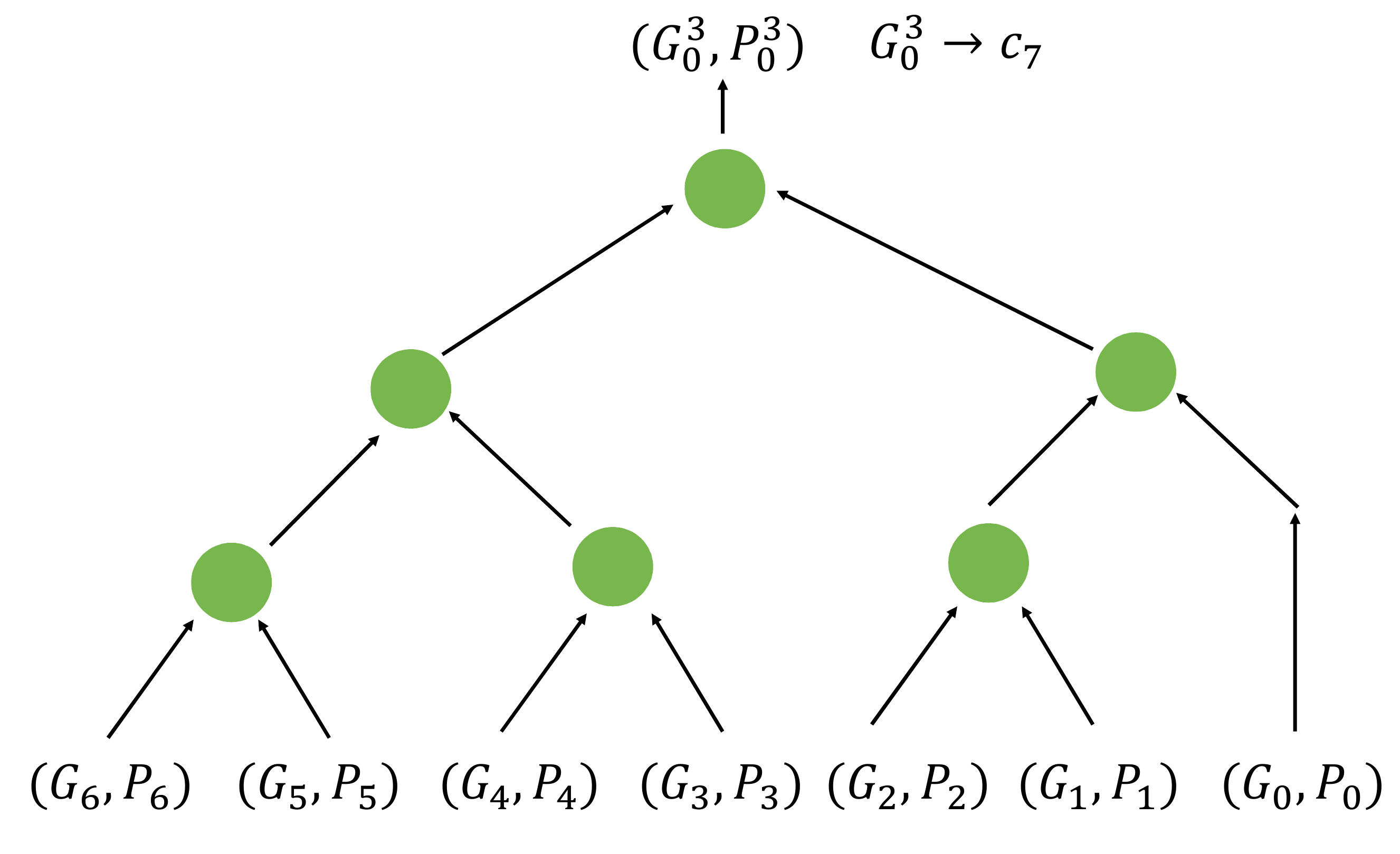}\\\footnotesize{(b)}
  \end{minipage}
  \caption{(a) Illustration of the defined binary operator; (b) Illustration of carry calculation over $8$-bit inputs under the carry look-ahead adder.}
  \label{fig:carry_look_ahead}
\end{figure*}

\subsection{Secure Feature Selection}

Upon receiving the shares of the client's feature vector and the provider's decision tree, the cloud servers first perform secure feature selection which produces the secret sharing of a certain feature for each decision node, based on the encrypted index vector $\mathcal{I}$.
Note that hereafter all arithmetic operations are conducted by default in the ring $\mathbb{Z}_{2^l}$, unless otherwise stated.

We now describe how secure feature selection is achieved in our design.
For each decision node $\mathcal{D}_j$, the processing of secure feature selection would require as input the shares of the client' feature vector $\mathbf{x}$ and the shares of the corresponding index value $\mathcal{I}_j$ in the index vector $\mathcal{I}$.
The output is the secret sharing of the selected feature $x_{\mathcal{I}_j}$ for the decision node $\mathcal{D}_j$.

To accomplish this functionality for our secure outsourced decision tree services, we make an observation that this indeed can be treated as oblivious array-entry read, where the encrypted feature vector could be treated as an encrypted array, and the encrypted index value is used to obliviously select an entry from the array. 
We leverage this observation and identify that an approach from a very recent work \cite{CurranLGPD19} is suited for our scenario.
Using this approach as a basis, we devise the scheme for secure feature selection, which is given in Fig. \ref{fig:secure-input-selection}.

The intuition is as follows.
For ease of notation, we let $\mathbf{p}_0$ (resp. $\mathbf{p}_1$) denote the share of the feature vector $[\mathbf{x}]_0$ (resp. $[\mathbf{x}]_1$) at the cloud server $\mathcal{C}_0$ (resp. $\mathcal{C}_1$).
Each cloud server $\mathcal{C}_m$ ($m\in \{0,1\}$) first creates a new array $\mathbf{p}'_m$ which is derived from $\mathbf{p}_m$ by shifting its indices and entries under fixed random values (per feature selection).
Then, given the secret sharing of a target index value $\mathcal{I}_j$, $\mathcal{C}_0$ engages in an interaction with $\mathcal{C}_1$ so as to receive the entry located at the shifted index $((\mathcal{I}_j+s_1) \bmod 2^l) \bmod I$ in $\mathbf{p}'_1$ which corresponds to $\mathcal{I}_j$.
Since the random value $s_1$ in the shifted index is only known to $\mathcal{C}_1$ and the corresponding entry is masked by a random value $r_1$, $\mathcal{C}_0$ is oblivious to the original index $\mathcal{I}_j$ as well as the share $\mathbf{p}_1[\mathcal{I}_j]$ held by $\mathcal{C}_1$.
Similarly, $\mathcal{C}_1$ obtains the entry located at the shifted index in $\mathbf{p}'_0$ which corresponds to $\mathcal{I}_j$, while learning no information about the plain index $\mathcal{I}_j$ and the share of the corresponding entry value held by $\mathcal{C}_0$.
Next, the two cloud servers engage in an interaction which essentially performs secret re-sharing so as to obtain the secret sharing of the selected feature $x_{\mathcal{I}_j}$.

\subsection{Secure Decision Node Evaluation}

With the secret sharings of the threshold $y_j$ and selected feature $x_{\mathcal{I}_j}$ at each decision node $\mathcal{D}_j$, the cloud servers now perform secure decision node evaluation.
As this basically requires secure comparison of secret-shared values, the prior work \cite{ZhengDWWN20} transforms the problem of secure decision node evaluation to a simplified bit extraction problem in the secret sharing domain.
The key idea is to securely extract the most significant bit (MSB) of the subtraction result $\Delta=y_j-x_{\mathcal{I}_j}$ as the evaluation result at decision node $\mathcal{D}_j$. 

\begin{figure*}[!t]
\centering
\fbox{
  \begin{minipage} [t]{0.85\textwidth}

\textbf{Input:} Secret sharings $[y_j]$ and $[x_{\mathcal{I}_j}]$.

\textbf{Output:} Secret sharing $\left\langle{v_j}\right\rangle$.

\begin{enumerate}[1:]
\item $\mathcal{C}_{m}$ computes $[\Delta]_{m}=[y_j]_{m}-[x_{\mathcal{I}_j}]_{m}$.
\\
// Secure MSB extraction (with $l=64$ assumed; $\left\langle{\cdot}\right\rangle$ denotes sharing over $\mathbb{Z}_2$)

\item Let $a$ (resp. $b$) represent the share $[\Delta]_0$ (resp. $[\Delta]_1$), with the bit string being $a_{l-1},\cdots, a_0$ (resp. $b_{l-1},\cdots, b_0$).
%
%
Let $\left\langle{a_q}\right\rangle$ be defined as $(\left\langle{a_q}\right\rangle_0=a_q,\left\langle{a_q}\right\rangle_1=0)$ and $\left\langle{b_q}\right\rangle$ as $\{ \left\langle{b_q}\right\rangle_0=0,\left\langle{b_q}\right\rangle_1=b_q \}$, where $q\in [0,l-1]$.
Also, let $\left\langle{w_q} \right\rangle$ be defined as $\{\left\langle{w_q}\right\rangle_0=a_q, \left\langle{w_q}\right\rangle_1=b_q\}$.

// Setup round for secure carry computation (\emph{SCC}):

\item Compute $\left\langle{G_q}\right\rangle=\left\langle{a_q}\right\rangle\cdot \left\langle{b_q}\right\rangle$, for $q\in[0,l-1]$

\item Compute $\left\langle{P_q}\right\rangle=\left\langle{a_q}\right\rangle+\left\langle{b_q}\right\rangle$, for $q\in[0,l-1]$

// \emph{SCC} round 1 (with $l=64$ as example):

\item Compute $(\left\langle{G^1_0}\right\rangle, \left\langle{P^1_0}\right\rangle)=(\left\langle{G_0}\right\rangle, \left\langle{P_0}\right\rangle)$

\item For $k\in \{1,\cdots, 31\}$
\begin{enumerate}

\item Compute $(\left\langle{G^1_k}\right\rangle, \left\langle{P^1_k}\right\rangle)=(\left\langle{G_{2\cdot k}}\right\rangle, \left\langle{P_{2\cdot k}}\right\rangle) \tilde{\diamond} (\left\langle{G_{2\cdot k-1}}\right\rangle, \left\langle{P_{2\cdot k-1}}\right\rangle)$

\end{enumerate}

// \emph{SCC} round 2:

\item For $k\in \{0,\cdots, 15\}$

\begin{enumerate}
\item Compute $(\left\langle{G^2_k}\right\rangle, \left\langle{P^2_k}\right\rangle)=(\left\langle{G^1_{2\cdot k+1}}\right\rangle, \left\langle{P^1_{2\cdot k+1}}\right\rangle) \tilde{\diamond} (\left\langle{G^1_{2\cdot k}}\right\rangle, \left\langle{P^1_{2\cdot k}}\right\rangle)$
\end{enumerate}

// \emph{SCC} round 3: 
\item For $k\in \{0,\cdots, 7\}$

\begin{enumerate}
\item Compute $(\left\langle{G^3_k}\right\rangle, \left\langle{P^3_k}\right\rangle)=(\left\langle{G^2_{2\cdot k+1}}\right\rangle, \left\langle{P^2_{2\cdot k+1}}\right\rangle) \tilde{\diamond} (\left\langle{G^2_{2\cdot k}}\right\rangle, \left\langle{P^2_{2\cdot k}}\right\rangle)$
\end{enumerate}

// \emph{SCC} round 4: 
\item For $k\in \{0,\cdots, 3\}$

\begin{enumerate}
\item Compute $(\left\langle{G^4_k}\right\rangle, \left\langle{P^4_k}\right\rangle)=(\left\langle{G^3_{2\cdot k+1}}\right\rangle, \left\langle{P^3_{2\cdot k+1}}\right\rangle) \tilde{\diamond} (\left\langle{G^3_{2\cdot k}}\right\rangle, \left\langle{P^3_{2\cdot k}}\right\rangle)$
\end{enumerate}

// \emph{SCC} round 5: 
\item For $k\in \{0, 1\}$

\begin{enumerate}
\item Compute $(\left\langle{G^5_k}\right\rangle, \left\langle{P^5_k}\right\rangle)=(\left\langle{G^4_{2\cdot k+1}}\right\rangle, \left\langle{P^4_{2\cdot k+1}}\right\rangle) \tilde{\diamond} (\left\langle{G^4_{2\cdot k}}\right\rangle, \left\langle{P^4_{2\cdot k}}\right\rangle)$
\end{enumerate}

// \emph{SCC} round 6: 
\item Compute $\left\langle{G^6_0}\right\rangle=\left\langle{G^5_1}\right\rangle+\left\langle{G^5_0}\right\rangle \cdot \left\langle{P^5_1}\right\rangle=\left\langle{c_{l-1}}\right\rangle$

\item Compute $\left\langle{v_j}\right\rangle=\left\langle{w_{l-1}}\right\rangle+ \left\langle{c_{l-1}} \right\rangle$.

\end{enumerate}
\end{minipage}
}

\caption{Secure evaluation of a decision node.}
\label{fig:secure-decision-node-evaluation}
\end{figure*}

Despite the effectiveness, their solution is limited in that it poses linear $O(l)$ round complexity.
This would lead to high performance overhead in the realistic scenario where the two cloud servers are situated in different geographic regions and connected over WAN, which is a more reasonable setting than local networks given that the two cloud servers are assumed to be non-colluding \cite{ChenPopa20}.
The basic idea in \cite{ZhengDWWN20} is to implement a $l$-bit full adder logic in the secret sharing domain.
Specifically, the shares of the difference value $\Delta$ at the two cloud servers are represented in bitwise form respectively.
Then, a $l$-bit full adder logic is applied to add in the secret sharing domain the two binary inputs in a bitwise manner, where carry bits are calculated and propagated, and finally produce the MSB of the difference value $\Delta$.
We note that the work \cite{ZhengDWWN20} uses a classical and standard adder logic called ripple carry adder, as shown in Fig. \ref{fig:ripple_carry_adder}.

In the ripple carry adder, for each full adder, the two bits that are to be added are available instantly. 
However, each full adder has to wait for the carry input to arrive from its previous adder.
This means that the carry input for the full adder producing the MSB should wait after the carry has rippled through all previous full adders.
Note that computing the carry output of each full adder in the secret sharing domain requires interactions between the two cloud servers, thus leading to $O(l)$ round complexity.

Our design follows \cite{ZhengDWWN20} in terms of the same strategy of secure MSB bit extraction for secure decision node evaluation, yet aims to reduce the round complexity.
Through the design introduced below, we manage to reduce the round complexity from linear to logarithmic.
Our observation is that the use of the more advanced carry look-ahead adder can solve the carry delay problem presented in the ripple carry adder \cite{DHarris03}.
At a high level, a carry look-ahead adder is able to calculate the carry in advance based on only the input bits.
It works as follows.
Firstly, two terms are defined for the carry look-ahead adder: the carry generate signal $G_i$ and the carry propagate signal $P_i$, where $G_i=a_i\cdot b_i$ and $P_i=a_i+b_i$.
Note that these two terms are only based on the input bits and can be computed instantly given the input bits.
Then, the original carry calculation $c_{i+1}=a_i\cdot b_i+(a_i+b_i)\cdot c_i$ as in the ripple carry adder can then be re-formulated as $c_{i+1}=G_i+P_i\cdot c_i$.
Such re-formulation allows a carry to be computed without waiting for the carry to ripple through all previous stages, as demonstrated by the following example (a $4$-bit carry look-ahead adder):

\begin{enumerate}

\item $c_1=G_0+P_0\cdot c_0=G_0$;
\item $c_2=G_1+P_1\cdot c_1=G_1+P_1\cdot G_0$;
\item $c_3=G_2+P_2\cdot c_2=G_2+P_2\cdot (G_1+P_1\cdot G_0)$; 
\item $c_4=G_3+P_3\cdot c_3=G_3+P_3\cdot (G_2+P_2\cdot (G_1+P_1\cdot G_0))$.
\end{enumerate}

It can be seen that each carry can be computed without waiting for the calculation of all previous carries.

\begin{figure*}[!t]
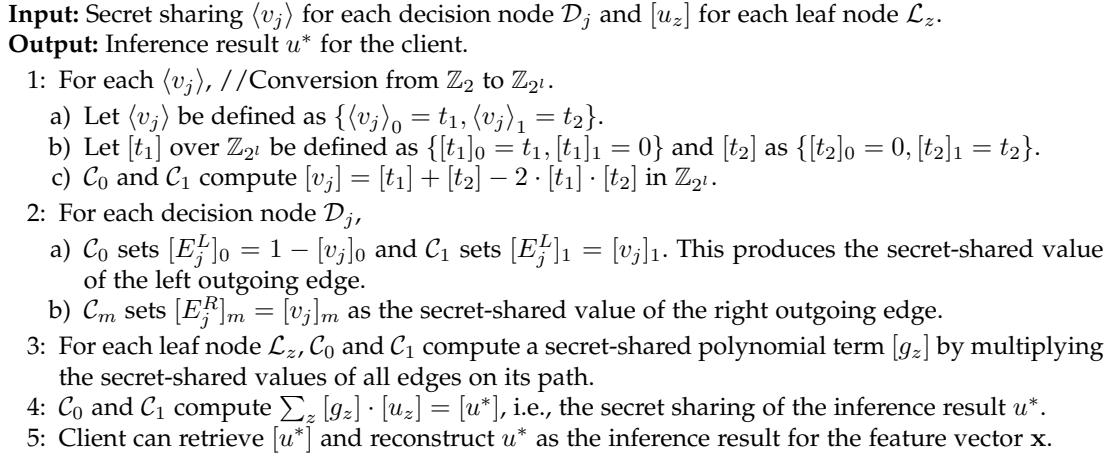

\centering

\fbox{
  \begin{minipage} [t]{0.8\textwidth}

\textbf{Input:} Secret sharing $\left\langle{v_j}\right\rangle$ for each decision node $\mathcal{D}_j$ and $[u_z]$ for each leaf node $\mathcal{L}_z$.

\textbf{Output:} Inference result $u^*$ for the client.

\begin{enumerate}[1:]

\item For each $\left\langle{v_j}\right\rangle$, //Conversion from $\mathbb{Z}_2$ to $\mathbb{Z}_{2^l}$. 

\begin{enumerate}
	\item Let $\left\langle{v_j}\right\rangle$ be defined as $\{\left\langle{v_j}\right\rangle_0=t_1,\left\langle{v_j}\right\rangle_1=t_2\}$. 

	\item Let $[t_1]$ over $\mathbb{Z}_{2^l}$ be defined as $\{[t_1]_0=t_1,[t_1]_1=0\}$ and $[t_2]$ as $\{[t_2]_0=0,[t_2]_1=t_2\}$.

\item $\mathcal{C}_0$ and $\mathcal{C}_1$ compute $[v_j]=[t_1]+[t_2]-2\cdot [t_1]\cdot [t_2]$ in $\mathbb{Z}_{2^l}$.
\end{enumerate}

\item For each decision node $\mathcal{D}_j$,

\begin{enumerate}

\item $\mathcal{C}_0$ sets $[E^L_{j}]_0=1-[v_j]_0$ and $\mathcal{C}_1$ sets $[E^L_{j}]_1=[v_j]_1$. This produces the secret-shared value of the left outgoing edge.

\item $\mathcal{C}_{m}$ sets $[E^R_{j}]_{m}=[v_j]_{m}$ as the secret-shared value of the right outgoing edge.
\end{enumerate}

\item For each leaf node $\mathcal{L}_z$, $\mathcal{C}_0$ and $\mathcal{C}_1$ compute a secret-shared polynomial term $[g_z]$ by multiplying the secret-shared values of all edges on its path.


\item $\mathcal{C}_0$ and $\mathcal{C}_1$ compute $\sum\nolimits_z{[g_z]\cdot [u_z]}=[u^*]$, i.e., the secret sharing of the inference result $u^*$.

\item Client can retrieve $[u^*]$ and reconstruct $u^*$ as the inference result for the feature vector $\mathbf{x}$.

\end{enumerate}
\end{minipage}
}
\caption{Secure inference generation.}
\label{fig:polynomial-secure-classification-gen}
\end{figure*}

With the application of the carry look-ahead adder for MSB computation for secure decision node evaluation, we only need to focus on the calculation of the carry $c_{l-1}$ (e.g., $c_3$ in the above example for $4$-bit inputs).
That is, after computing $c_{l-1}$, the MSB can be derived via $MSB=a_{l-1}+b_{l-1}+c_{l-1}$ .
Note that $c_{l-1}=G_{l-2}+P_{l-2}\cdot G_{l-3}+\cdots +P_{l-2}\cdots P_{1}\cdot G_0$.
We now need to consider how to properly organize the computation of the carry $c_{l-1}$ so that we could effectively achieve $O(\log_2 l)$ round complexity.  
Our observation is that such a computation can be supported by forming a binary tree over the carry generate terms, carry propagate terms, and a binary operator $\diamond$ (illustrated in Fig. \ref{fig:carry_look_ahead}(a)) defined as: $(G^*,P^*)=(G'',P'')\diamond (G',P')$, where $G^*=G''+G'\cdot P''$ and $P^*=P''\cdot P'$.
For demonstration of this idea, we show in Fig. \ref{fig:carry_look_ahead}(b) an example on how the carry bit essential for MSB computation can be computed in a recursive manner based on a tree structure, in the case of $8$-bit inputs.
As shown, a pair of the carry generate and propagate terms is put as a leaf node of the tree.
Then, the processing is done upwards attributing to each internal node the value corresponding to the application of the operator $\diamond$ between its two children.
Such processing leads to $\lceil{\log_2 l}\rceil$ rounds in computing the essential carry $c_{l-1}$.

For simplicity of illustration, we show here the details for the case of $4$-bit inputs to concretely demonstrate the computation.
In the first round, the following terms are computed: $(G^1_1,P^1_1)=(G_2,P_2)\diamond (G_1,P_1)$, $(G^1_0,P^1_0)=(G_0,P_0)$.
In the second round, the following term is computed: $(G^2_0, P^2_0)=(G^1_1,P^1_1)\diamond (G^1_0,P^1_0)$.
Based on the definition of the binary operator $\diamond$, we first have $G^1_1=G_2+G_1\cdot P_2,P^1_1=P_2\cdot P_1$.
Then, we have $G^2_0=G^1_1+G^1_0\cdot P^1_1=G_2+G_1\cdot P_2+G_0\cdot P_2P_1$, which corresponds to the carry $c_3$.

With all the above insights  in mind, we now elaborate on how to support secure decision node evaluation by taking advantage of the carry look-ahead adder in the ciphertext domain.
From the definition of the operator $\diamond$, it is noted that only addition and multiplication are required (in $\mathbb{Z}_2$).
So it is easy to see this operator can be securely realized in the ciphertext domain via secret-shared addition and multiplication.
We denote the secure realization of the operator as $(\left\langle{G^*}\right\rangle,\left\langle{P^*}\right\rangle)=(\left\langle{G''}\right\rangle,\left\langle{P''}\right\rangle) \tilde{\diamond} (\left\langle{G'}\right\rangle,\left\langle{P'}\right\rangle)$, where $\left\langle{\cdot}\right\rangle$ denotes secret sharing in $\mathbb{Z}_2$.
Note that each call of the operator needs $2$ parallel secret-shared multiplications.

The details of secure decision node evaluation are provided in Fig. \ref{fig:secure-decision-node-evaluation}.
It is noted that for simplicity and without loss of generality, we demonstrate the procedure assuming that $l=64$, which is the practical parameter setting to be used in our experiments, and also consistent with the state-the-art work \cite{ZhengDWWN20}.
Also, to clearly show the computation that can be done in parallel in each round of secure carry computation, we intentionally avoid the use of nested loops.
%
%
%
From the procedure shown in Fig. \ref{fig:secure-decision-node-evaluation}, we can see that for the practical setting $l=64$, only $7$ rounds are required through our design for obtaining the secret sharing $\left\langle{v}\right\rangle$ of the comparison result, in comparison with $125$ rounds in the state-of-the-art work \cite{ZhengDWWN20}.

We also point out that the improvement on communication rounds does not sacrifice the computation efficiency in terms of number of multiplications (in $\mathbb{Z}_2$).
Through analysis, our new design requires $3l-5$ ($187$ for $l=64$) multiplications while the prior work requires $3l-5$ ($187$ for $l=64$) multiplications as well.
We remark that although in principle the carry look-ahead adder has higher circuit complexity than the ripple carry adder when computing \emph{all} carries is required, our design only needs the computation of the essential carry $c_{l-1}$.
This accounts for why our new design does not enlarge the computation cost in secure decision node evaluation compared with \cite{ZhengDWWN20}.

\subsection{Secure Inference Generation}

With the secret-shared evaluation results available at each decision node, we now describe how to leverage them to enable the two cloud servers to generate the encrypted inference result.
We note that there are two approaches on how to use the decision node evaluation results \cite{ZhengDWWN20}: a path cost-based approach and a polynomial-based approach.
From the perspective of client cost, the main difference between these approaches is that the path cost-based approach imposes on the client high communication complexity exponentially scaling with the tree depth, while the polynomial-based approach only incurs constant $O(1)$ and minimal communication cost (just two shares) for the client. 

Given that high local efficiency is our first priority, our system makes use of the polynomial-based approach.
This approach works as follows.
Starting from the root node, the left outgoing edge of each decision node $\mathcal{D}_j$ is assigned the value $1-v_j$ (denoted as $E^L_{j}=1-v_j$), while the right outgoing edge is assigned the value $v_j$ (denoted as $E^R_{j}=v_j$).
Then, a term $g_z$ is computed for each path by multiplying the edge values of that path.
As mentioned before, only the term for the path leading to the leaf node carrying the inference result will have the value $1$, and all other terms are $0$.
Then, we can proceed by multiplying each term $g_z$ with the prediction value $u_z$ of each corresponding leaf node and computing the sum, i.e., $u^*  = \sum\nolimits_z {u_z g_z }$, which will lead to the expected inference result $u^*$.

We use the decision tree in Fig. \ref{fig:decision_tree} as an example to concretely demonstrate the computation.
Firstly, there are four terms $\{g_z\}^4_{z=1}$ given that the depth is $2$ and thus four paths/leaf nodes.
These terms are computed as follows: $g_1=(1-v_1)\cdot (1-v_2)$, $g_2=(1-v_1)\cdot v_2$, $g_3=v_1\cdot (1-v_3)$, and $g_4=v_1\cdot v_3$.
Suppose that the feature vector is evaluated along the path of the leaf node $\mathcal{L}_3$. 
We have $v_1=1$ and $v_3=0$, and so we have $g_1=0 \cdot (1-v_2)=0$, $g_2=0 \cdot v_2=0$, $g_3= 1\cdot (1-0)=1$, and $g_4=1\cdot 0=0$.
It can be seen that all the terms except the term $g_3$ corresponding to $\mathcal{L}_3$ has zero value.
So we have $\sum\nolimits^4_{z=1} {u_z g_z }=u_3$, obtaining the expected inference result.
The secure realization of this approach in our system for secure inference generation basically follows that of \cite{ZhengDWWN20}.
For completeness, we give the details of secure inference generation in Fig. \ref{fig:polynomial-secure-classification-gen}, which realizes polynomial mechanism introduced above in the secret sharing domain. 

\section{Security Analysis}
\label{sec:security-analysis}

We define and prove the security of our protocol following the standard simulation-based paradigm. 
We start with defining the ideal functionality which captures the desired security properties for outsourced decision tree inference, with regard to the threat model mentioned above.
We then give the formal security definition under the ideal functionality and show that our protocol securely realizes the ideal functionality.
In what follows, we define the ideal functionality for the secure outsourced decision tree inference service targeted in this paper.

\begin{definition}
The ideal functionality $\mathcal{F}_{\mathsf{SecODT}}$ of the outsourced decision tree inference service is formulated as follows.

\begin{enumerate}[-]

\item \textbf{Input.} The input to the $\mathcal{F}_{\mathsf{SecODT}}$ consists of the decision tree $\mathcal{T}$ from the provider and the feature vector $\mathbf{x}$ from the client. The two cloud servers $\mathcal{C}_0$ and $\mathcal{C}_1$ provide no input to the $\mathcal{F}_{\mathsf{SecODT}}$.

\item \textbf{Computation.} Upon receiving the above input, the $\mathcal{F}_{\mathsf{SecODT}}$ performs decision tree inference and produces the inference result denoted as $\mathcal{T}(\mathbf{x})$.

\item \textbf{Output.} The $\mathcal{F}_{\mathsf{SecODT}}$ outputs the inference result $\mathcal{T}(\mathbf{x})$ to the client, and outputs nothing to the provider and cloud servers.
\end{enumerate}

\end{definition}

\begin{definition}
\label{def:security-def}
A protocol $\Pi$ securely realizes the $\mathcal{F}_{\mathsf{SecODT}}$ in the semi-honest adversary setting with static corruption if the following guarantees are satisfied:

\begin{enumerate}[-]

\item \textbf{Corrupted provider.} A corrupted and semi-honest provider learns nothing about the values in the client's feature vector $\mathbf{x}$. Formally, a probabilistic polynomial time (PPT) simulator $\mathsf{Sim}_{\mathcal{P}}$ should exist so that $\mathsf{View}^{\Pi}_{\mathcal{P}} \mathop  \approx \limits^c \mathsf{Sim}_{\mathcal{P}}(\mathcal{T})$, where $\mathcal{P}$ denotes the provider and $\mathsf{View}^{\Pi}_{\mathcal{P}}$ refers to the view of $\mathcal{P}$ in the real-world execution of the protocol $\Pi$.

\item \textbf{Corrupted cloud server.} A corrupted and semi-honest cloud server $\mathcal{C}_m$ ($m\in\{0,1\}$) learns no information about the client's feature vector $\mathbf{x}$ and the provider's decision tree $\mathcal{T}$. Formally, a PPT simulator $\mathsf{Sim}_{\mathcal{C}_i}$ should exist such that $\mathsf{View}^{\Pi}_{\mathcal{C}_m} \mathop  \approx \limits^c \mathsf{Sim}_{\mathcal{C}_m}$, where $\mathsf{View}^{\Pi}_{\mathcal{C}_m}$ denotes the view of the cloud server $\mathcal{C}_m$ in the real-world execution of the protocol $\Pi$.
Note that the two cloud servers have no input and output according to the $\mathcal{F}_{\mathsf{SecODT}}$. Since they are non-colluding, $\mathcal{C}_0$ and $\mathcal{C}_1$ cannot be corrupted by the adversary at the same time. 

\item \textbf{Corrupted client.} A corrupted and semi-honest client learns no information about the provider's decision tree other than generic meta-parameters as stated before. Formally, a PPT simulator $\mathsf{Sim}_{\mathcal{U}}$ should exist such that $\mathsf{View}^{\Pi}_{\mathcal{U}} \mathop  \approx \limits^c \mathsf{Sim}_{\mathcal{U}}(\mathbf{x},\mathcal{T}(\mathbf{x}))$, where $\mathcal{U}$ denotes the client, and $\mathsf{View}^{\Pi}_{\mathcal{U}}$ refers to the view of $\mathcal{U}$ in the real-world execution of the protocol $\Pi$.

\end{enumerate}
\end{definition}

\begin{theorem}
\label{thm:security-guarantee}
Our protocol is a secure realization of the ideal functionality $\mathcal{F}_{\mathsf{SecODT}}$ according to Definition \ref{def:security-def}. 
\end{theorem}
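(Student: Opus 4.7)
The plan is to prove Theorem \ref{thm:security-guarantee} by constructing, for each of the three corruption scenarios in Definition \ref{def:security-def}, a PPT simulator that reproduces a view computationally indistinguishable from the real execution. The central pillar of every simulation will be the perfect-hiding property of 2-out-of-2 additive secret sharing: any single share is uniformly distributed and independent of the underlying value, so every transcript entry consisting of a single share can be simulated by sampling uniformly from $\mathbb{Z}_{2^l}$ (or $\mathbb{Z}_2$ for bit shares). A secondary ingredient is the security of Beaver triple-based multiplication, in which the only values opened are $[e]=[\alpha]-[t_1]$ and $[f]=[\beta]-[t_2]$, both uniformly distributed because the triple masks $t_1,t_2$ are fresh and secret-shared. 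A third ingredient is the security of the underlying OT primitive used in the secure feature selection step of Fig. \ref{fig:secure-input-selection}, whose receiver/sender simulators we can invoke as black-box subroutines.

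The corrupted provider case is immediate: since $\mathcal{P}$ goes offline after posting ciphertexts, its view consists solely of the output of its own local encryption of $\mathcal{T}$. The simulator $\mathsf{Sim}_{\mathcal{P}}$, on input $\mathcal{T}$, simply re-executes that local encryption and outputs the result; real and ideal views are identical. The corrupted client case is nearly as direct: the only protocol-specific messages delivered to $\mathcal{U}$ are the two shares $[u^*]_0,[u^*]_1$ of the final inference result. Given $\mathbf{x}$ and $\mathcal{T}(\mathbf{x})$, the simulator $\mathsf{Sim}_{\mathcal{U}}$ draws $r\leftarrow \mathbb{Z}_{2^l}$ uniformly and outputs $(\mathcal{T}(\mathbf{x})-r,\, r)$; by perfect hiding these shares are distributed identically to the real ones, and no other decision-tree information is transmitted to the client.

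The core argument concerns a corrupted cloud server $\mathcal{C}_m$; without loss of generality take $m=0$. I would build $\mathsf{Sim}_{\mathcal{C}_0}$ as a hybrid sequence, one phase at a time. In the input preparation phase, $\mathcal{C}_0$ receives $[\mathbf{x}]_0,[\mathbf{y}]_0,[\mathbf{u}]_0,[\mathcal{I}]_0$, simulated by uniform sampling in $\mathbb{Z}_{2^l}$ of the appropriate dimensions. In secure feature selection, $\mathcal{C}_0$ receives from $\mathcal{C}_1$ the value $\mathcal{I}_j+r+s_1$, uniformly random in its view because $s_1$ is chosen fresh and unknown to $\mathcal{C}_0$; then acts as OT receiver to obtain one element of $\mathbf{p}'_1$, which the simulator produces by invoking the OT receiver-side simulator together with a uniform $\mathbb{Z}_{2^l}$-element; and finally receives $\mathbf{p}^*_0[i'_0]=\mathbf{p}'_0[i'_0]-r_1-r'$, uniform because of the mask $r'$. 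In secure decision node evaluation (Fig. \ref{fig:secure-decision-node-evaluation}), every interaction at each carry-computation round is a Beaver-triple multiplication in $\mathbb{Z}_2$; each opened message arriving from $\mathcal{C}_1$ is masked by a fresh triple share and is simulated by a uniform bit. The conversions and polynomial aggregation in Fig. \ref{fig:polynomial-secure-classification-gen} reduce to further secret-shared multiplications, simulated identically. Composing the hybrid hops, the cumulative statistical distance is zero and the residual computational distance is bounded by that of the OT subprotocol, yielding $\mathsf{View}^{\Pi}_{\mathcal{C}_0}\approx^c \mathsf{Sim}_{\mathcal{C}_0}$; the $m=1$ case is symmetric, and non-collusion of the servers rules out a joint corruption.

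The hard part will be bookkeeping rather than cryptographic novelty: I must verify that across the many parallel applications of the $\tilde{\diamond}$ operator and the logarithmically many rounds of secure carry computation, no intermediate bit is ever revealed except through a multiplication triple that freshly masks both operands, and that each triple is used exactly once so that opened values remain mutually independent from the simulator's perspective. Once that invariant is established for the $3l-5$ multiplications of Fig. \ref{fig:secure-decision-node-evaluation} and for the aggregation polynomial of Fig. \ref{fig:polynomial-secure-classification-gen}, indistinguishability in the corrupted cloud server case reduces to a routine hybrid argument plus a single appeal to the security of the OT primitive, completing the proof of Theorem \ref{thm:security-guarantee}.
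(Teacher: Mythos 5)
Your proposal is correct and follows essentially the same route as the paper's own proof: a per-corruption-case simulation argument resting on the perfect hiding of $2$-out-of-$2$ additive shares, with a dummy simulator for the provider, a two-share resharing of $\mathcal{T}(\mathbf{x})$ for the client, and a composed simulator for a single cloud server built from the feature-selection (oblivious array-entry read plus OT) and Beaver-multiplication sub-simulators. The only difference is one of granularity—you unpack the feature-selection messages ($\mathcal{I}_j+r+s_1$, the OT output, the $r'$-masked entry) explicitly, whereas the paper invokes a black-box simulator $\mathsf{Sim}^{\mathsf{ORead}}_{\mathcal{C}_0}$ whose existence it attributes to prior work—which does not change the substance of the argument.
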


\begin{proof}
As per the security definition, we show the existence of a simulator for different corrupted parties (the provider, the client, and either of the cloud servers).

\begin{enumerate}[-]

\item \emph{Simulator for the corrupted provider}:
In the protocol $\Pi$, the provider only needs to supply the secret shares of the decision tree and receives no messages.
So, the simulator for the corrupted provider can thus be constructed in a dummy way by just outputting the input of the provider.
The output of $\mathsf{Sim}_{\mathcal{P}}(\mathcal{T})$ is identically distributed to the view $\mathsf{View}^{\Pi}_{\mathcal{P}}$ of the corrupted provider.

\item \emph{Simulator for a corrupted cloud server:} 
As two cloud servers have a symmetric role in our protocol $\Pi$, it suffices to show a simulator $\mathsf{Sim}_{\mathcal{C}_0}$ for $\mathcal{C}_0$.
It is noted that the input/output of $\mathcal{C}_0$ in our protocol are just secret shares of some data.
The security of additive secret sharing ensures that these secret shares are purely random and can be perfectly simulated by $\mathsf{Sim}_{\mathcal{C}_0}$ using random values.
For the interactions between the two cloud servers in different phases, they are in fact due to the calls of the oblivious array-entry read procedure (only used in the secure feature selection phase) and the secret-shared multiplication procedure based on Beaver's triples.
Let $\mathsf{Sim}^{\mathsf{ORead}}_{\mathcal{C}_0}$ and $\mathsf{Sim}^{\mathsf{SecMul}}_{\mathcal{C}_0}$ denote the corresponding simulators which can simulate a view indistinguishable from real view for $\mathcal{C}_0$ in the oblivious array-entry read procedure and the secret-shared multiplication procedure respectively.
It is noted that the existence of these two simulators has been proved in prior work.
With the existence of these simulators, $\mathsf{Sim}_{\mathcal{C}_0}$ first runs $\mathsf{Sim}^{\mathsf{ORead}}_{\mathcal{C}_0}$ with random strings as input in the secure feature selection phase. 
Then, $\mathsf{Sim}_{\mathcal{C}_0}$ sets the simulated output as the input to the subsequent phases.
On each call of the secret-shared multiplication procedure, $\mathsf{Sim}_{\mathcal{C}_0}$ runs $\mathsf{Sim}^{\mathsf{SecMul}}_{\mathcal{C}_0}$ in order.
Finally, $\mathsf{Sim}_{\mathcal{C}_0}$ combines and outputs in order the simulated view by $\mathsf{Sim}^{\mathsf{ORead}}_{\mathcal{C}_0}$ and $\mathsf{Sim}^{\mathsf{BMul}}$ on every secure multiplication as its output.
This generates the final simulator $\mathsf{Sim}_{\mathcal{C}_0}$ for the cloud server $\mathcal{C}_0$.

\item \emph{Simulator for the corrupted client:}
In the protocol $\Pi$, the client supplies secret shares of the feature vector $\mathbf{x}$ and only receives the two shares $[u^*]_0$ and $[u^*]_1$ of the inference result, from which the plaintext inference result $u^*$ is reconstructed as the output of the client.
The simulator thus only needs to simulate the messages (two shares) received by the client given his output $u^*$.
It can set a random value $r$ as one of the shares, say $[u^*]_1$, and $u^*-r$ for the other share $[u^*]_0$.
This is in fact just a direct application of additive secret sharing, the security of which ensures that $[u^*]_0$ and $[u^*]_1$ random values and indistinguishable from the shares received by the client.
The combination of the two simulated shares also produces $u^*$, which is the same as the output in the real protocol execution and thus guarantees correctness.
So the output of $\mathsf{Sim}_{\mathcal{U}}(\mathbf{x},\mathcal{T}(\mathbf{x}))$ is identically distributed to the view $\mathsf{View}^{\Pi}_{\mathcal{U}}$ of the corrupted client.
The proof of Theorem \ref{thm:security-guarantee} is completed.

\end{enumerate}

\end{proof}

\section{Experiments}
\label{sec:experiments}
\subsection{Setup}

Our protocol is implemented in C++.
For the oblivious transfer primitive, we rely on the libOTe library \cite{libOTe} which provides implementation of the protocol in \cite{KolesnikovKRT16}.
%
Cloud-side experiments are conducted over two AWS t3.xlarge instances equipped with Intel Xeon Platinum 8175M CPU (2.50GHz and 16GB RAM): one in Europe (London) and one in US East (N. Virginia).
The average latency is $75.422$ ms and bandwidth is $161$ Mbits/s.
These two instances are situated in different regions for simulating the real-world scenario that the cloud servers are in different trust domains. 
The provider and the client are evaluated on an AWS t2.xlarge instance possessing an Intel Xeon E5-2676 v3 processor (2.40GHz and 16GB RAM).
We test with synthetic decision trees with realistic configurations, following prior works \cite{DWuFNL16,TaiMZC17,ZhengDWWN20}. 
The tree depth $d$ varies from $3$ to $17$, and the dimension $I$ of the feature vector varies from $9$ to $57$. 
We make comparison with the state-of-the-art prior work by Zheng et al. \cite{ZhengDWWN20} (the \emph{ZDWWN} protocol).

\subsection{Local-side Performance Evaluation}
%

\begin{figure}[t!]
\centerline{\includegraphics[width=0.46\textwidth]{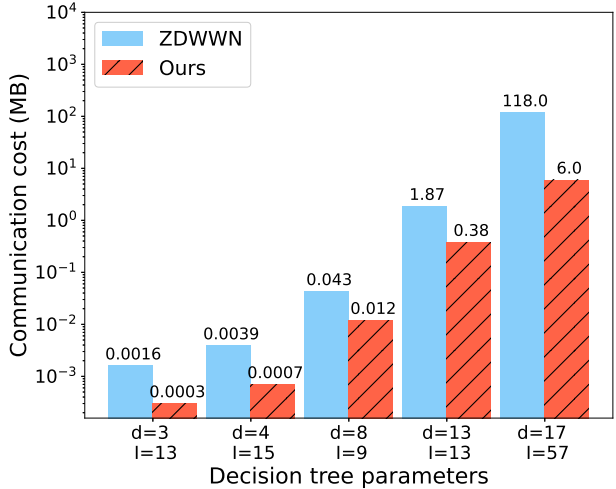}}
\caption{Communication performance of the provider.}
\label{fig:provider_commn_cost}
\end{figure}

\begin{figure}[t!]
\centerline{\includegraphics[width=0.46\textwidth]{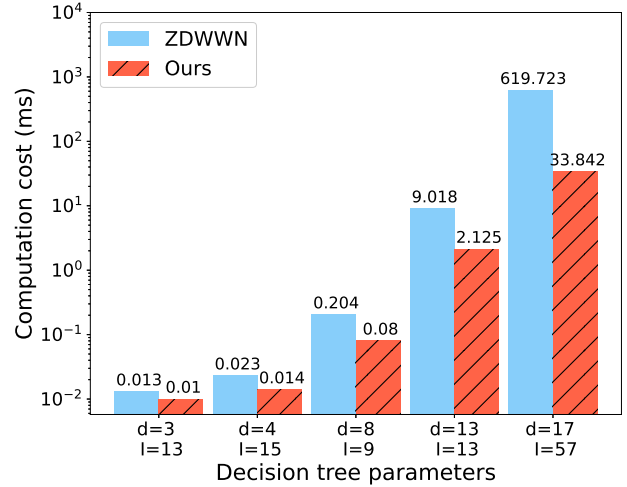}}
\caption{Computation performance of the provider.}
\label{fig:provider_compute_cost}
\end{figure}

We first evaluate the performance on the local side, i.e., the provider and the client.
Fig. \ref{fig:provider_commn_cost} and Fig. \ref{fig:provider_compute_cost} show the communication and computation costs of the provider for varying decisions trees, along with comparison with the ZDWWN protocol.
As the provider in our design just constructs an index vector of $O(J)$ size rather than a matrix of size $O(J\cdot I)$ as in the ZDWWN protocol, he can enjoy significant cost savings.
For different decision trees being tested, the communication cost of the provider ranges from $0.0003$ MB to $6$ MB in our system, while it is from $0.0016$ MB to $118$ MB in the ZDWWN protocol.
In terms of the provider's running time, it varies from $0.01$ ms to $33.842$ ms in our system, while it is from $0.013$ ms to $619.723$ ms in the ZDWWN protocol.
Overall, our system can offer the provider up to $19 \times$ savings (average of $7 \times$) in communication.
In computation, our system can offer the provider up to $18 \times$ savings (average of $5 \times$).
In Table \ref{table:client-cost}, we give the communication cost and computation cost of the client respectively.
It is noted that the client has minimal costs which only scales with the dimension of the feature vector.
Recall that the client in our system has the same cost as in the ZDWWN protocol, given that the polynomial-based mechanism is used in the phase of secure inference generation which allows the client to only receive the two shares of the inference result.

\begin{table}[!t]
\centering
\caption{Performace of the Client}
\begin{tabular}{@{}cccc@{}}
\toprule
$d$  & $I$  & Computation (ms) & Communication (KB) \\ \midrule
3  & 13 & 0.0057           & 0.22               \\ 
4  & 15 & 0.0060           & 0.25               \\ 
8  & 9  & 0.0054           & 0.16               \\ 
13 & 13 & 0.0056           & 0.22               \\ 
17 & 57 & 0.0098           & 0.91               \\ \bottomrule
\end{tabular}
\label{table:client-cost}
\end{table}

\subsection{Cloud-side Performance Evaluation}
We now examine the performance on the cloud side.
Firstly, we show in Fig. \ref{fig:cloud_commn_cost} the amount of data transferred between the cloud servers in our system and make comparison with the ZDWWN protocol.
Our design has relatively higher communication cost (average of $1.9\times$) than the ZDWWN protocol, which is mainly due to the sue of OT in the secure feature selection phase and the secret-shared multiplications in the secure inference generation phase.
We emphasize that such overhead in these two phases is the trade-off for the substantial efficiency improvement on the provider side, which is the first priority in our design philosophy.

\noindent\textbf{Significantly reduced overall online cloud service latency.} On another hand, it is noted that the overall end-to-end online inference latency in our system is still much less than the ZDWWN protocol, as shown in Fig. \ref{fig:cloud_compute_cost}.
This means that compared with the ZDWWN protocol, our system provides much better service experience for the client and fits much better into the practical realm due to the capability in giving faster response.
In particular, our new design is up to $8\times$ (average of $5 \times$) faster than the ZDWWN protocol.
Such efficiency is attributed to the significant optimization on the secure decision node evaluation phase, where the round complexity is largely reduced from linear (as in the ZDWWN protocol) to logarithmic.
A breakdown of the overall cloud-side online inference latency is given in Table \ref{table:cost}.

\begin{figure}[t!]
\centerline{\includegraphics[width=0.45\textwidth]{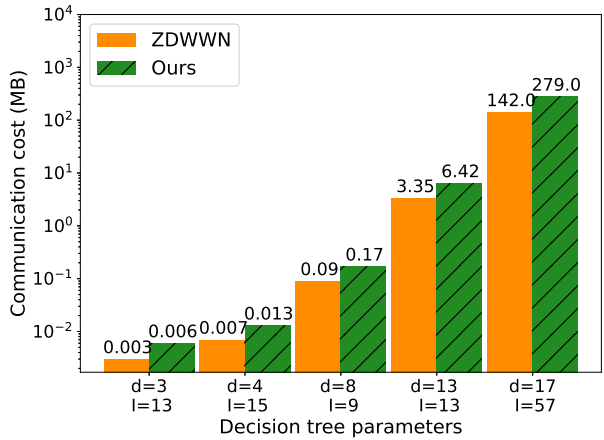}}
\caption{Communication performance at the cloud.}
\label{fig:cloud_commn_cost}
\end{figure}

\section{Related Work}
\label{sec:related_work}

There has been some work on secure decision tree inference \cite{BostPTG15,DWuFNL16,TaiMZC17,Cock17,TuenoKK19,TuenoBK20}.
Most of prior works \cite{BostPTG15,DWuFNL16,TaiMZC17,TuenoKK19,Cock17,TuenoBK20} focus on the non-outsourcing setting where a customized protocol is designed for running between the provider and the client.
For example, in \cite{TaiMZC17}, to achieve secure feature selection, the client sends to the provider the ciphertext of the feature vector under homomorphic encryption, and then the provider directly selects the ciphertext of each feature for each decision node based on his plaintext selection mapping.
Whether these protocols can be effectively adapted to the outsourcing setting remains largely unclear, since the outsourced service requires operations to conducted over encrypted decision tree and feature vector from the very beginning and also raises more design considerations for security and functionality.
Moreover, many protocols make use of heavy cryptographic tools (e.g., fully/partially homomorphic encryption, garbled circuits, and ORAM) in the latency-sensitive online interactions.
Although the protocol in \cite{Cock17} uses secret sharing, it is yet designed to fully and inefficiently work on binary representations of the decision tree as well as the feature vector provided from the very beginning, with all the secure processing conducted at bitwise level. 
So their protocol is also not directly adaptable for efficient secure outsourcing.

Very recently, the work \cite{ZhengDWWN20} presents the first design tailored for secure outsourcing of decision tree inference, which runs under the two-server model and only makes use of additive secret sharing to securely realize the various components for the online execution of the service.
As an initial attempt, however, its performance is yet to be optimized.
Our new highly efficient design presents significant optimizations which largely improves the overall online end-to-end latency of the secure inference service provided by the cloud, as well as the provider's performance.

\begin{figure}[t!]
\centerline{\includegraphics[width=0.45\textwidth]{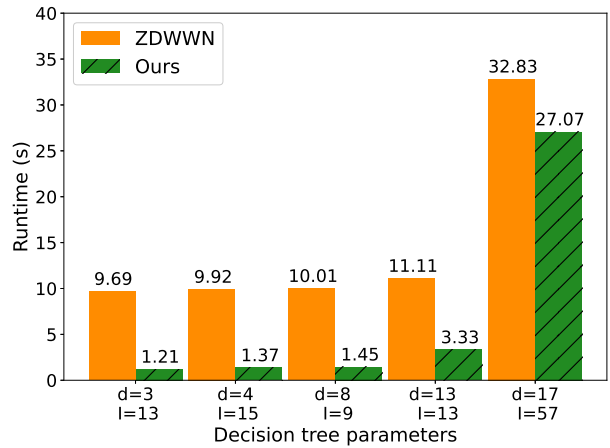}}
\caption{Overall \emph{end-to-end} online runtime performance at the cloud over realistic WAN.}
\label{fig:cloud_compute_cost}
\end{figure}

\begin{table*}[t!]
\centering
\caption{Breakdown of Runtimes in Different Phases (in seconds) at the Cloud Servers, over WAN}
\setlength{\tabcolsep}{0.7em}
\renewcommand{\arraystretch}{1.2}
\begin{tabular}{@{}cccccccccc@{}}
\toprule
\multicolumn{2}{c}{Parameters} & \multicolumn{2}{c}{Secure Feature Selection} & \multicolumn{2}{c}{Secure Decision Node Evaluation} & \multicolumn{2}{c}{Secure Inference Generation}  \\ \midrule
$d$                & $I$                & ZDWWN          & Ours        & ZDWWN         & Ours         & ZDWWN          & Ours        \\ \midrule
3                & 13               &     0.151          &    \textbf{0.527}         &   9.38           &  \textbf{0.529}            &   0.154             & \textbf{0.154}                        \\
4                & 15               &    0.156           &    \textbf{0.529}         &    9.454           &  \textbf{0.53}           &   0.306             &\textbf{0.306}                    \\
8                & 9                &     0.167          &    \textbf{0.533}         &    9.456           &  \textbf{0.532}            &  0.383              & \textbf{0.383}                       \\
13               & 13               &        0.393       &     \textbf{0.91}        &      10.03         &   \textbf{1.735}           &  0.69              & \textbf{0.69}                       \\
17               & 57               &     19.376          &      \textbf{21.006}       &   11.669            &   \textbf{4.281}           & 1.785
               &         \textbf{1.785}
              \\ \bottomrule
\end{tabular}
\label{table:cost}
\end{table*}

Our work is also related to the line of work (e.g., \cite{BostPTG15,LiuJLA17,JuvekarVC18,MohasselZ17,MLSZP20}, to just list a few) on securely evaluating other machine learning models, such as hyperplane decision \cite{BostPTG15}, Na{\"{\i}}ve Bayes \cite{BostPTG15}, neural networks \cite{LiuJLA17,JuvekarVC18,MohasselZ17,MLSZP20}.    
The common blueprint therein is to build specializaedprotocols tailored for the specific computation required by different models through different cryptographic techniques.
For example, the work of Liu et al. \cite{LiuJLA17} supports secure neural network evaluation using secret sharing and garbled circuits;
the work of Juvekar et al. \cite{JuvekarVC18} relies on highly customized use of homomorphic encryption and garbled circuits to support low latency in secure neural network evaluation. 
Most of these works operate under the non-outsourcing and aim to protect privacy for the model and client input.
There are some works \cite{MohasselZ17,NikolaenkoWIJBT13ridge} also operating under the two-server model as in this work, with the tailored support for secure evaluation of models such as linear regression, logistic regression, and neural networks.
Some recent efforts have also been presented on secure machine learning under the three-server \cite{MohasselR18,WaghGC19}/four-server \cite{Chaudhari2020} model (for models other than decision trees), where three/four servers have to engage in the online interactions.

\section{Conclusion}
\label{sec:conclusion}

In this paper, we design, implement, and evaluate a new system that allows highly efficient secure outsourcing of decision tree inference.
Through the synergy of several delicate optimizations which securely shift most workload of the provider to the cloud and reduce the communication round complexities between the cloud servers, our system significantly improves upon the state-of-the-art prior work.
Extensive experiments demonstrated that compared with the state-of-the-art, our new system achieves up to $8 \times$ better online end-to-end inference latency between the cloud servers over realistic WAN, as well as allows the provider to enjoy $19 \times$ savings in communication cost and $18 \times$ savings in computation cost.

\balance
\bibliographystyle{IEEEtran}
\bibliography{references}

\end{document}